\documentclass[orivec]{llncs}

\usepackage[utf8]{inputenc}
\usepackage{graphicx}
\usepackage{subcaption}
\usepackage{amsmath,amssymb}
\usepackage{booktabs}
\usepackage{array}
\usepackage{enumitem}
\usepackage{cite}
\usepackage[pdfpagelabels,colorlinks,allcolors=blue]{hyperref}

\newcommand{\df}[1]{{\it #1}}
\newcommand{\dfb}[1]{{\it \bf #1}}
\newcommand{\Oh}{{\ensuremath{\mathcal{O}}}}
\newcommand{\Sh}{{\ensuremath{\mathcal{S}}}}
\newcommand{\Qh}{{\ensuremath{\mathcal{Q}}}}
\newcommand{\eps}{\ensuremath{\varepsilon}}

\newcommand{\Gc}{{\ensuremath{\overline{G}}}}

\let\doendproof\endproof
\renewcommand\endproof{~\hfill\qed\doendproof}

\newtheorem{conj}{Conjecture}
\newtheorem{cs}{Case}


\newcounter{dummycount}

\newcommand{\wormholeThm}[1]{
    \newcounter{#1}
    \setcounter{#1}{\value{theorem}}}

\newenvironment{backInTimeThm}[1]{
    \setcounter{dummycount}{\value{theorem}}
    \setcounter{theorem}{\value{#1}}}
{\setcounter{theorem}{\value{dummycount}}}



\begin{document}
    
\title{Mixed Linear Layouts of Planar Graphs}

\author{
    Sergey Pupyrev
}

\institute{\email{spupyrev@gmail.com}}

\date{}
\maketitle

\begin{abstract}
    A $k$-stack (respectively, $k$-queue) layout of a graph consists of a total order of the vertices, and a partition
    of the edges into $k$ sets of non-crossing (non-nested) edges with respect to the vertex ordering.
    In 1992, Heath and Rosenberg conjectured that every planar graph admits a mixed $1$-stack $1$-queue layout in which
    every edge is assigned to a stack or to a queue that use a common vertex ordering.
    
    We disprove this conjecture by providing a planar graph that does not have such a mixed layout.
    In addition, we study mixed layouts of graph subdivisions, and show that every planar graph has a mixed
    subdivision with one division vertex per edge.
\end{abstract}

\section{Introduction}

A \df{stack layout} of a graph consists of a linear order on the vertices and an assignment of the
edges to \df{stacks}, such that no two edges in a single stack cross. A ``dual'' concept is
a \df{queue layout}, which is defined similarly, except that no two edges in a single queue may
be nested. The minimum number of stacks (queues) needed in a stack layout (queue layout) of a graph
is called its \df{stack number} (\df{queue number}). Stack and queue layouts were respectively introduced by 
Ollmann~\cite{Oll73} and Heath et al.~\cite{HR92,HLR92}. These are ubiquitous structures with a variety of applications, 
including complexity theory, VLSI design, bioinformatics, parallel process
scheduling, matrix computations, permutation sorting, and graph drawing; see~\cite{DW04} for more details.

Stack and queue layouts have been extensively studied for planar graphs. The stack number of a graph, also known as
\df{book thickness}, is one if and only if the graph is outerplanar~\cite{BK79}.
The stack number of a graph $G$ is at most two if and only if $G$ is subhamiltonian, that is,
a subgraph of a planar graph that has a Hamiltonian cycle~\cite{BK79}. More generally, all planar graphs have stack
number at most four~\cite{Yan89}. Similarly, every graph admitting a $1$-queue layout is planar with an
``arched leveled-planar'' embedding~\cite{HLR92}. Many subclasses of planar graphs have bounded 
queue number: Every tree has queue number one~\cite{HR92}, outerplanar graphs have queue number at most two~\cite{HLR92};
series-parallel graphs have queue number at most three~\cite{RM95}, and planar 3-trees have queue number at most
seven~\cite{Wie17}. It is, however, an open question whether every planar graph have a constant
queue number; Dujmovi{\'c} shows that planar graphs have queue number $\Oh(\log n)$~\cite{Duj15}, improving 
an earlier result of $\Oh(\log^4 n)$ by Di Battista et al.~\cite{BFP13}.

Stack and queue layouts are generalized through the notion of a \df{mixed} layout, in which
every edge is assigned to a stack or to a queue that is defined with respect to a common
vertex ordering~\cite{HR92}. Such a layout is called an \df{$s$-stack $q$-queue} layout, if it utilizes $s$ stacks and $q$
queues. One reason for studying mixed stack and queue layouts is that they model the dequeue data structure, as
a dequeue may be simulated by two stacks and one queue~\cite{DW05,Auer14}.
Here we study mixed layouts of planar graphs.

In their seminal paper~\cite{HR92}, Heath and Rosenberg make the following conjecture, which
has hitherto been unresolved.

\begin{conj}[Heath and Rosenberg~\cite{HR92}]
    \label{conj:HR}
    Every planar graph admits a \break mixed $1$-stack $1$-queue layout.
\end{conj}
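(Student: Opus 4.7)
The plan is to refute Conjecture~\ref{conj:HR} by exhibiting a planar graph $G$ that admits no mixed $1$-stack $1$-queue layout. The crucial observation is that, once a linear order $\sigma$ on $V(G)$ is fixed, deciding whether the edges can be split into a stack class and a queue class reduces to a 2-SAT instance: introduce a variable $x_e$ that is true iff edge $e$ goes on the stack; each pair of edges that cross under $\sigma$ contributes the clause $(\neg x_e \vee \neg x_f)$, and each pair that nest contributes $(x_e \vee x_f)$, while disjoint or incident pairs impose no constraint. Disproving the conjecture therefore amounts to producing a planar $G$ for which this 2-SAT instance is unsatisfiable for \emph{every} vertex order $\sigma$.

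My approach would be to search for a small, structurally rigid planar graph in which one can always pinpoint an unsatisfiable implication cycle. Such cycles arise from chains of edges $e_1,e_2,\dots,e_k$ in which consecutive pairs alternate between crossing and nesting, so that following the implications forces $x_{e_1} \Rightarrow \neg x_{e_1}$. Concretely, I would start from natural candidate families---stacked triangulations, nested wheels, or gadgets assembled around a separating triangle so that any linear order must split the triangle nontrivially---and run a brute-force computer check over vertex permutations (modulo automorphisms) to find a concrete $G$ whose associated 2-SAT instance is infeasible for every $\sigma$. Once a candidate is located, the task is to replace the computer certificate by a human proof: pick a handful of distinguished vertices of $G$ (e.g.\ the vertices of a separating triangle, cut vertices, or vertices of high degree), split the orders by their relative positions, and in each resulting case exhibit an explicit alternating chain that certifies unsatisfiability.

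The main obstacle is that the obstruction must hold \emph{for every} linear order, which is a global condition, whereas the natural witnesses (alternating crossing/nesting chains) are inherently local. Balancing this against planarity is delicate: $G$ must contain enough locally constrained substructure to rule out every $\sigma$, yet remain sparse enough to embed in the plane. Exploiting the automorphism group of the candidate $G$ to collapse symmetric orders will be essential to keep the case analysis finite. If a direct argument proves unwieldy, a fallback is to iterate a simpler near-obstruction by gluing many copies along common anchor vertices, and then argue by a pigeonhole or averaging argument that in any order some pair of copies must realize the full forbidden pattern.
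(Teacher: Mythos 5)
Your high-level plan is the right one, and your ``fallback'' is in fact exactly what the paper does: it glues $19$ copies of a gadget $H$ along two shared anchor vertices $A$ and $B$, observes that the four distinguished vertices $A,B,s,t$ of a copy can appear in only six relative orders, and applies pigeonhole to get four copies realizing the same order, after which each of the (four, up to symmetry) cases is killed by an explicit forced chain of stack/queue assignments. Your 2-SAT framing is also sound and is implicitly how the paper's case analysis works (crossing $\Rightarrow$ at most one on the stack, nesting $\Rightarrow$ at most one on the queue, and the contradictions are unsatisfiable implication chains).

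However, what you have written is a research plan, not a proof: the entire mathematical content of the theorem lies in (a) designing a concrete gadget for which the forcing actually goes through, and (b) executing the case analysis, and both are left unspecified. The paper's gadget is quite particular --- two adjacent ``twins'' $s,t$ each joined to $A$, to $B$, and to seven common degree-$2$ ``connectors'' --- and the role of the connectors (at least three must escape any queue edge covering the pair, and at least one must then carry both a stack and a queue edge) is the engine of all four cases; nothing in your proposal would lead you to this structure other than trial and error. Moreover, your primary plan (brute-force search for a small rigid counterexample over all vertex orders modulo automorphisms) is practically doomed: the paper's exhaustive experiments show that \emph{every} maximal planar graph on at most $18$ vertices admits a mixed layout, and the smallest known counterexample has $37$ vertices, far beyond enumerating permutations. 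So the search would have to be guided by precisely the gadget-plus-pigeonhole design you relegate to a fallback. As it stands the proposal contains no counterexample and no argument, so it does not establish the result.
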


In this paper we disprove the conjecture by providing a planar graph that does not have 
a $1$-stack $1$-queue layout.

\wormholeThm{thm-ce}
\begin{theorem}
    \label{thm:ce}
    There exists a planar graph that does not admit a mixed $1$-stack $1$-queue layout.
\end{theorem}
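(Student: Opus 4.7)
The plan is to construct an explicit planar graph $G$ and show, by a structural analysis of its linear orderings, that $G$ admits no mixed $1$-stack $1$-queue layout. I would proceed in three steps.

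First, I would isolate a small combinatorial obstruction: a set of edges whose pattern of crossings and nestings in a given linear order cannot be partitioned between a single stack and a single queue. A clean candidate is the following four-edge pattern: edges $e_1, e_2, e_3, e_4$ such that $e_1$ and $e_2$ cross, $e_3$ and $e_4$ cross, and each of $e_3, e_4$ is nested strictly inside both $e_1$ and $e_2$. A short case check shows this cannot be $2$-coloured: labelling $e_1$ as a stack edge forces $e_2$ into the queue (they cross), which forces $e_3, e_4$ into the stack (they nest with $e_2$), contradicting the fact that $e_3, e_4$ cross; the case $e_1 \in \text{queue}$ is symmetric via $e_3, e_4$ nesting with $e_1$.

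Second, I would design $G$ so that every linear order of $V(G)$ realises this obstruction. The main ingredient is a small planar ``crossing gadget'' $H$ whose edges are forced, in every linear order of $V(H)$, to contain at least one crossing pair; a non-sub-Hamiltonian planar subgraph is a natural choice. By taking several disjoint copies of $H$ and joining them with a rigid planar scaffold, the construction should force that in every linear order of $V(G)$ two copies of $H$ lie in nested intervals, so that their internal forced crossings instantiate the pairs $(e_1,e_2)$ and $(e_3,e_4)$ of the obstruction.

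Finally, I would verify planarity of $G$ by exhibiting an embedding and complete the correctness argument by exhausting the possible relative positions of the distinguished vertices in the linear order. The main obstacle is this last step: with too few copies of $H$ or an insufficiently rigid scaffold, an adversary can interleave the vertices of different copies so that no two copies are cleanly nested. A pigeonhole or Ramsey-type argument is likely needed to force the nested arrangement in every order, and keeping the overall graph planar under this rigidity is where the construction must be most delicate. Balancing the ``forced nesting'' requirement against planarity is the chief technical challenge.
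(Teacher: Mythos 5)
Your local obstruction is sound: the four-edge pattern (two crossing pairs, one nested strictly inside both edges of the other) indeed admits no stack/queue $2$-colouring, and the paper's case analyses use forcing chains of exactly this flavour. The overall shape of your plan --- a local unrealizable pattern, gadget replication, and a pigeonhole argument over the relative positions of distinguished vertices --- also matches the paper's strategy at a high level. (One small imprecision: to force a crossing pair in every vertex ordering you need a non-\emph{outerplanar} gadget, not a non-sub-Hamiltonian one; the latter is the obstruction for two stacks.)

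However, there is a genuine gap, and it sits precisely where you locate the ``chief technical challenge'': nothing in your plan can force two disjoint copies of a crossing gadget to occupy nested intervals of the vertex ordering. A linear layout places no constraint whatsoever on how the vertices of vertex-disjoint subgraphs interleave, and any ``rigid planar scaffold'' joining the copies is itself subject to reordering; planarity leaves you far too few edges to pin down the relative positions of whole copies. The paper resolves this differently: the $19$ gadgets are not disjoint but all share two hub vertices $A$ and $B$, each gadget consists of a \emph{twin edge} $(s,t)$ plus seven degree-$2$ \emph{connectors} joining $s$ to $t$, and the pigeonhole is applied not to nesting of intervals but to the six possible orderings of $\{s,t\}$ relative to $\{A,B\}$, guaranteeing four gadgets in the same configuration. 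The contradiction is then extracted not from your clean four-edge pattern but from a chain of lemmas exploiting the connectors: a queue edge covering a twin pair forces at least three connectors outside its span and forces some connector to carry one stack and one queue edge, and these forced edges then collide across gadgets. So while your first step is correct and your instincts about where the difficulty lies are right, the construction you sketch does not supply the mechanism that makes the argument go through, and the ``disjoint copies in nested intervals'' route would need to be abandoned in favour of gadgets that share vertices.
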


We found, however, that mixed layouts are rather ``powerful''. Our experimental evaluation indicates that
\df{all} planar graphs with $|V| \le 18$ vertices admit a $1$-stack $1$-queue layout. This is in contrast
with pure stack and queue layouts: There exists a $11$-vertex planar graph that requires three stacks, and
there exists $14$-vertex planar graphs that requires three queues. Thus a reasonable question is what subclasses of 
planar graphs admit a $1$-stack $1$-queue layout. Dujmovi{\'c} and Wood~\cite{DW05} consider graph
subdivisions, that is, graphs created by replacing every edge of a graph by a path; they show that every 
planar graph has a $1$-stack $1$-queue 
subdivision with four division vertices per edge. We strengthen this result by showing that one division vertex
per edge is sufficient.

\wormholeThm{thm-11}
\begin{theorem}
    \label{thm:11}
    Every planar graph admits a mixed $1$-stack $1$-queue subdivision with one division vertex per edge.
\end{theorem}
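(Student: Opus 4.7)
I would start from a structural decomposition of the planar graph $G$ such as Yannakakis's $4$-page book embedding, which gives a linear vertex order $\sigma$ and a partition of the edges into four non-crossing arc families $P_1, P_2, P_3, P_4$. Let $G'$ denote the $1$-subdivision of $G$; every edge $e = uv$ (with $u <_\sigma v$) gets a division vertex $m_e$, splitting $e$ into a left half $u m_e$ and a right half $m_e v$. The plan is to define an extended order $\pi$ on $V(G) \cup \{m_e : e \in E(G)\}$ and an assignment of the half-edges to one stack and one queue.

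The main device is to place each midpoint $m_e$ immediately after its left endpoint $u$ in $\pi$, so that the left half $u m_e$ becomes an arc between two consecutive positions---an edge that is trivially compatible with either a stack or a queue. All nontrivial interactions then come from the right halves $m_e v$, which behave almost like the original edges with respect to $\sigma$. Within the block of midpoints sitting after each vertex $u$, the internal order is chosen depending on the page of each edge: ordering by decreasing right endpoint preserves nesting of right halves (compatible with a stack), whereas ordering by increasing right endpoint turns nested right halves into parallel ones (compatible with a queue). The assignment of each page to the stack or to the queue is designed to respect these two modes, with the additional freedom of splitting the two halves of a single edge between the two structures whenever helpful.

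The core lemma I would aim to prove is a reduction statement of the following form: for any constant number of non-crossing arc families sharing a common vertex order, there is an extended order on the $1$-subdivision and an assignment of its edges to one stack and one queue. Given such a lemma, Theorem~\ref{thm:11} follows directly by applying it to Yannakakis's $4$-page embedding of $G$.

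The principal obstacle is the consolidation across pages. Two edges from different pages, which were safely drawn on opposite sides of the spine, can interact through a forbidden crossing (between two right halves) or nesting (between a short left half and a long right half) once they are forced onto a single side. The subdivision provides exactly the local flexibility needed to avoid this: by threading the midpoint blocks of one page through different positions than those of another, we can reroute the conflicting half without disturbing the other. I would expect the verification to reduce to an explicit case analysis on the relative positions of two edges in $\sigma$---disjoint, nested, or crossing---and on their page assignments, with each case settled by the appropriate choice of block ordering. The main technical content of the theorem is precisely that one division vertex per edge is enough for this consolidation, improving over the four vertices needed by the construction of Dujmovi\'c and Wood.
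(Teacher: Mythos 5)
Your plan has a decisive gap at its central device: placing each division vertex $m_e$ immediately after the left endpoint of $e$. With that placement, for any two edges $e_1=(u_1,v_1)$ and $e_2=(u_2,v_2)$ with distinct left endpoints $u_1<u_2$, the block of midpoints following $u_1$ lies entirely before $u_2$, so $m_{e_1}<u_2<m_{e_2}<\min(v_1,v_2)$; hence the right halves $(m_{e_1},v_1)$ and $(m_{e_2},v_2)$ cross exactly when $e_1,e_2$ cross in $\sigma$ and nest exactly when $e_1,e_2$ nest in $\sigma$, regardless of how the blocks are ordered internally. Consequently any legal assignment of the right halves to one stack and one queue induces, by giving each original edge the class of its right half, a valid mixed $1$-stack $1$-queue layout of $G$ itself under $\sigma$ (pairs of edges sharing an endpoint never cross or nest, so they impose no constraint). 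By Theorem~\ref{thm:ce} no such layout exists for the graph $\Gc$, for \emph{any} vertex order; so for $\Gc$ your construction cannot be completed no matter how the pages are distributed or how the two halves of each edge are split between $\Sh$ and $\Qh$. A division vertex kept adjacent to an endpoint buys nothing; the entire content of Theorem~\ref{thm:11} is that the division vertex must be placed far from both endpoints in the final order. Your fallback of ``threading the midpoint blocks of one page through different positions'' is precisely where all the work lies, and the proposal neither specifies nor analyzes it.

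Relatedly, the ``core lemma'' you reduce to---that any union of constantly many non-crossing arc families over a common order has a $1$-stack $1$-queue $1$-subdivision---is a strictly stronger, unproved claim; the only evidence offered is a promissory note about a case analysis, and Dujmovi{\'c} and Wood's general machinery yields only $O(\log sn(G))$ division vertices from stack number. The paper's proof takes an entirely different route that uses planarity rather than a book embedding: it builds an ordered concentric representation (a planar drawing organized by BFS levels around an origin), orders the vertices level by level, places same-level edges in the stack (each level with its level edges is outerplanar), places edges joining consecutive levels in the queue (two such edges between the same pair of circles cannot nest without crossing in the drawing), and subdivides each remaining edge at its first crossing with the next circle, so that one half becomes a short inter-level queue edge and the other a stack edge. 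It is the planar drawing, not the four-page partition, that makes a single division vertex per edge suffice.
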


\paragraph{Proof Ideas and Organization.}

Our construction of the counterexample for Conjecture~\ref{conj:HR} (presented in Section~\ref{sect:ce}) is based
on a sequence of gadgets~---~planar graphs that do not admit a mixed layout under certain conditions. We start
with a relatively simple gadget whose linear layouts can be analyzed exhaustively; this gadget does not
admit a mixed layout under fairly strong conditions. Several small gadgets are combined into a bigger one, that does not
have a mixed layout under weaker conditions; these bigger gadgets are combined together to produce the final counterexample.
We believe that such an approach is general and can be used for creating other lower bounds in the context of linear layouts.

Our technique for proving Theorem~\ref{thm:11} (considered in Section~\ref{sect:sub}) is quite different from
the one used by Dujmovi{\'c} and Wood~\cite{DW05} for proving the earlier (weaker) result.
We make use of the so-called concentric representation of planar graphs.
While the existence of such a representation for a planar graph is known, we extend the representation by finding
a suitable order for the vertices, and show that all planar graphs admit the extended representation.
We are not aware of any work that uses concentric representations in the context of linear layouts.

Section~\ref{sect:disc} concludes the paper with a discussion of our experiments, possible future directions, and
interesting open problems.

\paragraph{Related Work.}

Although there exists numerous works on stack and queue layouts of graphs (refer to~\cite{DW04} for a detailed list of references), 
the concept of mixed layouts received much less attention. Heath and Rosenberg~\cite{HR92} suggest to study such generalized
layouts and present Conjecture~\ref{conj:HR}, which is a topic of this paper.
Dujmovi{\'c} and Wood~\cite{DW05} investigate mixed layouts of graph subdivisions. They show that
every graph $G$ (not necessarily planar) has an $s$-stack $q$-queue subdivision with
$\Oh(\log sn(G))$ or $\Oh(\log qn(G))$ vertices per edge, where $sn(G)$ and $qn(G)$ are
the stack and queue numbers of $G$, respectively. Enomoto and Miyauchi~\cite{EM14} improve
the constants of the bounds for the numbers of division vertices per edge.

For the case of planar graphs, Dujmovi{\'c} and Wood~\cite{DW05} show that four division vertices per edge
are sufficient to construct a mixed $1$-stack $1$-queue; the bound is improved by Theorem~\ref{thm:11}.
Our result mimics the fact that every planar graph with one division vertex per edge
has a $2$-stack layout, as such graphs are bipartite~\cite{Over98}.
Also related is a work by Auer~\cite{Auer14} who study \df{dequeue} layouts of planar graphs, and
prove that a planar graph admits a dequeue layout if and only if it contains a Hamiltonian path.
Since a dequeue may be simulated by two stacks and one queue, such graphs also admit a $2$-stack $1$-queue
layout. To the best of our knowledge, it is open whether every planar graph has a $2$-stack $1$-queue layout.

\section{A Counterexample for Conjecture~1}
\label{sect:ce}

A \df{vertex ordering} of a graph $G=(V,E)$ is a total order of the vertex set $V$. In a vertex ordering
$<$ of $G$, let $L(e)$ and $R(e)$ denote the endpoints of an edge $e \in E$ such that $L(e) < R(e)$.
Consider two edges $e, f \in E$. If $L(e) < L(f) < R(e) < R(f)$ then $e$ and $f$ \df{cross}, 
and if $L(e) < L(f) < R(f) < R(e)$ then $e$ and $f$ \df{nest}. In the latter case, we also say that $e$ \df{covers} $f$.
It is convenient to express the total order $<$
by permutation of vertices $[v_1, v_2, \dots, v_{|V|}]$, where $v_1 < v_2 < \dots < v_{|V|}$.
This notion extends to a subset of vertices in
the natural way. Thus, two edges, $e$ and $f$, cross if the order is $[L(e), L(f), R(e), R(f)]$, and
they nest if the order is $[L(e), L(f), R(f), R(e)]$.
A \df{stack} (resp. \df{queue}) is a set of edges $E' \subset E$ such that no two edges in $E'$ cross (nest).
A \df{mixed} layout of a graph is a pair $(<, \{\Sh, \Qh\})$, where $<$ is a vertex ordering of $G$, 
and $\{\Sh, \Qh\}$ is a partition of $E$ into a stack $\Sh$ and a queue $\Qh$. 

Our counterexample for Conjecture~\ref{conj:HR} is depicted in Fig.~\ref{fig:mixed}. The graph, $\Gc$,
is built from $19$ copies of a gadget, $H$, by identifying two vertices, $A$ and $B$.
The graph consists of $173$ vertices and $361$ edges.
Let us introduce some definitions for the graph.
Every copy of gadget $H$ consists of two \df{twins}, $s$ and $t$, connected by a \df{twin
edge}, $(s, t)$. Each pair of twins is connected by $A$, $B$, and seven degree-2 vertices, $x_1, \dots, x_7$, that we call
\df{connectors}. The set of connectors corresponding to $s$ and $t$ is denoted by $C_{s,t}$.
Now we prove the main result of the section.

\begin{figure}[t]
    \centering
    \begin{subfigure}[b]{.4\linewidth}
        \includegraphics[page=1]{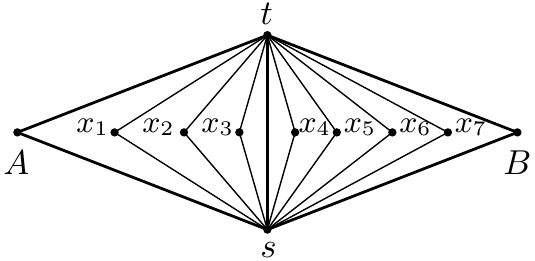}
        \caption{Gadget $H$}       
    \end{subfigure}
    \hfill
    \begin{subfigure}[b]{.4\linewidth}
        \includegraphics[page=2]{pics/mixed}
        \caption{A complete graph $\Gc$}
        \label{fig:ce}
    \end{subfigure}
    \caption{A graph that does not admit a mixed $1$-stack $1$-queue layout.}
    \label{fig:mixed}
\end{figure}

\begin{backInTimeThm}{thm-ce}
    \begin{theorem}
        There exists a planar graph that does not admit a mixed $1$-stack $1$-queue layout.
    \end{theorem}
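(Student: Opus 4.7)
I would follow the three-stage outline sketched in the Proof Ideas. Fix a hypothetical mixed $1$-stack $1$-queue layout $(<, \{\Sh, \Qh\})$ of $\Gc$. The first stage is to isolate a small subgadget of $H$ (for instance, the twins $s,t$, the twin edge, and a single connector $x_i$, together with $A$ and $B$) and enumerate, exhaustively, the finitely many ways its vertices can be interleaved in the linear order. For each such interleaving I would record which edges are forced to cross the twin edge and which are forced to nest with it. Combining this information with the choice of whether $(s,t)$ belongs to $\Sh$ or $\Qh$ should yield a \emph{local lemma}: under strong positional and coloring constraints, the subgadget has no valid extension to a mixed layout.

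The second stage is to amplify this local lemma into an obstruction for the full gadget $H$. The seven connectors $x_1,\dots,x_7$ all share the twin pair $\{s,t\}$ as neighbors, so each of the fourteen connector edges is pinned down, relative to $(s,t)$ and to $A,B$, by essentially the same few parameters. By aggregating the local obstructions from the seven connectors I would show that $H$ admits no mixed layout under \emph{weaker} conditions~---~for example, that merely fixing the gross interleaving pattern of $(s,t)$ with $(A,B)$ and the color of the twin edge already produces a contradiction inside $H$. This is the step where the exact number seven is tuned so that the obstruction fires.

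For the third stage I would classify each of the $19$ copies of $H$ by a small set of \emph{types}, where the type records the interleaving pattern of that copy's twin pair with the (shared) pair $\{A,B\}$ and the stack-or-queue assignment of the twin edge. Since all copies share $A$ and $B$ and there are only a constant number of such types, a pigeonhole argument over $19$ copies produces a subfamily of copies that share a common type. Within such a subfamily, the connector and twin edges from different copies interact in a constrained way~---~in particular, their twin edges are mutually crossing or mutually nested in a predictable fashion~---~and I would use the Stage~2 obstruction on this subfamily to derive the final contradiction.

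The main obstacle I expect is the bookkeeping in Stages~1 and~2: the case analysis has to cover every way the twins and connectors can be interleaved with $A,B$, and it has to track the forced conflicts precisely enough that the ``weaker conditions'' of Stage~2 still suffice. A secondary difficulty is calibrating the three numerical parameters of the construction~---~the number of connectors per gadget, the number of copies of $H$, and the number of types~---~so that the pigeonhole step in Stage~3 lands on exactly the multiplicity required to trigger Stage~2. This is presumably why the construction uses seven connectors and nineteen copies rather than arbitrary larger numbers.
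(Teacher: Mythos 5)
Your three-stage architecture does match the paper's: analyze local configurations, combine them into obstructions for a gadget, then pigeonhole over the $19$ copies sharing $A$ and $B$ (there are six possible interleavings of $s,t$ with $A,B$, so $19$ copies force at least four twin pairs into the same pattern, and symmetry reduces the six patterns to four cases). However, as written the proposal has a genuine gap at Stage~1, and everything downstream depends on it. The subgadget you propose to analyze~---~twins, twin edge, one connector, plus $A$ and $B$~---~carries no obstruction at all: a single connector forms a triangle with $s$ and $t$, which is outerplanar and embeds in one stack under every interleaving, so exhaustive enumeration of that subgadget yields no ``local lemma.'' The obstruction in the paper is inherently non-local in the connectors: one needs a queue edge $(v_1,v_2)$ covering $[s,t]$ and then the observation that if \emph{three} connectors also lie between $v_1$ and $v_2$, the induced graph on $\{s,t,x_5,x_6,x_7\}$ is not outerplanar, forcing a queue edge nested under $(v_1,v_2)$. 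This is why seven connectors are used: at most two can sit between $v_1$ and $v_2$, so at least five, hence at least three on one side, lie outside, and among those one connector must carry both a stack and a queue edge. Without identifying a covering queue edge as the trigger and the three-connector non-outerplanarity argument, Stages~2 and~3 have nothing to amplify.

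A secondary issue is that Stage~3, even granting a correct local lemma, is not a single uniform pigeonhole-plus-obstruction step. The paper needs three distinct auxiliary lemmas (covering queue edge forces connectors outside and a mixed-colored connector; a queue edge over a stack edge over a twin pair is impossible; three twin pairs under one queue edge are impossible) and then four genuinely different case analyses, each starting from a specific forced edge such as $(A,d_4)$ or $(B,d_8)$ and propagating stack/queue assignments until a contradiction. Your plan treats this as routine bookkeeping, but the choice of which edges to start from and which lemma applies differs per case; the proposal as it stands does not demonstrate that any one of the four cases actually closes.
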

\end{backInTimeThm}

\begin{proof}
    The proof is by contradiction; we assume that there exists a mixed layout of graph $\Gc$ shown in Fig.~\ref{fig:ce}.
    Using symmetry, we may assume that in the mixed layout of the graph $A < B$, $s_i < t_i$ for all $1 \le i \le 19$, 
    and $s_1 < s_2 < \dots < s_{19}$. Let us analyze possible relative orderings of vertices $A, B$ and 
    two twins, $s$ and $t$ in a gadget $H$. It is easy to see that there are only six permutations of the vertices:
    (i)~$[s, t, A, B]$; (ii)~$[s, A, t, B]$; (iii)~$[s, A, B, t]$; (iv)~$[A, s, t, B]$; (v)~$[A, s, B, t]$; (vi)~$[A, B, s, t]$.
    Since graph $\Gc$ contains $19$ pairs of twins, there exist at least four twin pairs that
    form the same permutation with $A$ and $B$.
    Therefore, to prove the claim of the theorem, it is sufficient to show impossibility of a mixed layout
    with four twin pairs forming the same permutation. 
    Permutations (i) and (vi) are considered in Case~\ref{cs:2}, as they are symmetric.
    Permutations (ii) and (v) are considered in Case~\ref{cs:4}. 
    Permutation (iii) is considered in Case~\ref{cs:1}. 
    Permutation (iv) is considered in Case~\ref{cs:3}. 
\end{proof}

Before moving to the case analysis, we prove three lemmas that are common for the proofs of all the cases.

\begin{lemma}
    \label{lm:RR}
    Assume that a vertex ordering of graph $\Gc$ contains $[v_1, s, t, v_2]$ with edge $(v_1, v_2) \in \Qh$ and
    twins $s, t$. Then the following holds:
    
    \begin{enumerate}[label={\bf\thelemma\alph*}]
        \item the order is $[v_1, s, t, v_2, x_1, x_2, x_3]$ or $[x_1, x_2, x_3, v_1, s, t, v_2]$
        for some connectors $x_1, x_2, x_3 \in C_{s,t}$; that is,        
        at least three of the connectors are either before $v_1$ or after $v_2$ in the order;
        \label{lm:RR1}
        
        \item $(s, x_i) \in \Sh$ and $(t, x_i) \in \Qh$, or $(s, x_i) \in \Qh$ and $(t, x_i) \in \Sh$
        for some $x_i \in C_{s,t}$, $1 \le i \le 3$; that is, at least one of the connectors is adjacent 
        to a queue edge and a stack edge. 
        \label{lm:RR2}
    \end{enumerate}
\end{lemma}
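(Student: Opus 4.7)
The plan is to leverage two easy consequences of having $(v_1, v_2) \in \Qh$ with $v_1 < s < t < v_2$. First, the twin edge $(s,t)$ is nested inside $(v_1,v_2)$, and two queue edges cannot nest, so $(s,t) \in \Sh$. Second, if any connector $x_i \in C_{s,t}$ lies anywhere in the open interval $(v_1, v_2)$, then both $(s, x_i)$ and $(t, x_i)$ are nested inside $(v_1, v_2)$, forcing them into $\Sh$ as well.

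To prove part (a), I would then establish that at most one connector can lie in $(v_1, v_2)$. A middle connector occupies exactly one of the three subintervals $(v_1, s)$, $(s, t)$, $(t, v_2)$, giving six unordered cases (two connectors in the same subinterval, or one in each of two distinct subintervals). For each case I would inspect the four stack edges induced on the two connectors and the twins, and exhibit a pair that crosses or nests, which is forbidden for stack edges. For instance, two connectors $x_i < x_j$ in $(v_1, s)$ give the order $[x_i, x_j, s, t]$ on the relevant vertices, so $(s, x_i)$ and $(t, x_j)$ cross; and $x_i \in (v_1, s)$, $x_j \in (s, t)$ give $[x_i, s, x_j, t]$, so $(t, x_i)$ nests over $(s, x_j)$. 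The remaining cases are analogous, two of them symmetric under reversal of the order combined with the swap $s \leftrightarrow t$. Once the middle has at most one connector, at least six of the seven connectors sit in $(-\infty, v_1) \cup (v_2, +\infty)$, and pigeonhole delivers three on a common side.

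For part (b), I would assume without loss of generality that $x_1 < x_2 < x_3 < v_1$ and suppose for contradiction that none of them is mixed; then each $x_i$ carries a ``color'' $c(x_i) \in \{\Sh, \Qh\}$ common to both $(s, x_i)$ and $(t, x_i)$. For any $i < j$, the order $[x_i, x_j, s, t]$ makes $(s, x_i), (t, x_j)$ cross and makes $(s, x_j), (t, x_i)$ nest; the crossing forbids $c(x_i) = c(x_j) = \Sh$ and the nesting forbids $c(x_i) = c(x_j) = \Qh$, so the three colors must be pairwise distinct, which is impossible in a two-element set. The main obstacle is the bookkeeping for the six region pairs in~(a); once ``at most one middle connector'' is in hand, both parts reduce to short pigeonhole arguments.
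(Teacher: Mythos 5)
Your part (b) is correct and essentially identical to the paper's argument (the paper picks two same-colored connectors by pigeonhole and uses the same crossing/nesting pair; you phrase it as all three colors being pairwise distinct). The problem is in part (a).

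The gap is your statement that you will ``exhibit a pair that crosses or nests, which is forbidden for stack edges.'' Nesting is \emph{not} forbidden for stack edges: by the paper's definitions a stack forbids only crossings and a queue forbids only nestings. This breaks exactly the two cases you wave at. For $x_i \in (v_1,s)$ and $x_j \in (s,t)$, the order is $[v_1, x_i, s, x_j, t, v_2]$ and the five forced stack edges $(s,x_i)$, $(t,x_i)$, $(s,x_j)$, $(t,x_j)$, $(s,t)$ are pairwise non-crossing (the only nontrivial pair, $(t,x_i)$ over $(s,x_j)$, is nested, which is allowed); the symmetric case $x_i \in (s,t)$, $x_j \in (t,v_2)$ fails the same way. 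So your intermediate claim that at most one connector lies in $(v_1,v_2)$ is not established, and in fact these two-connector configurations are locally consistent, so no local argument can rule them out.

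Fortunately the conclusion only needs ``at most \emph{two} connectors in the middle'': with $7$ connectors that still leaves at least $5$ outside $[v_1,v_2]$, and pigeonhole gives $3$ on one side. This weaker claim is what the paper proves, and by a different mechanism: three middle connectors together with $s$ and $t$ induce a graph containing $K_{2,3}$, which is not outerplanar, hence not $1$-stack under \emph{any} vertex ordering; so one of the six connector edges must be a queue edge, but every such edge is covered by $(v_1,v_2) \in \Qh$, a contradiction. (Alternatively, one can check directly that any placement of three connectors into the intervals $(v_1,s)$, $(s,t)$, $(t,v_2)$ forces two of the resulting stack edges to cross; your crossing-only analysis does succeed for two connectors in the same subinterval and for the pair $(v_1,s)$--$(t,v_2)$, and these observations extend to cover all three-connector placements.) As written, though, your proof of part (a) does not go through.
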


\begin{proof}
    For the first part of the lemma, assume that three of the connectors corresponding to $s$ and $t$ are
    between $v_1$ and $v_2$; that is, $v_1 < x_5 < x_6 < x_7 < v_2$ for some connectors $x_5, x_6, x_7 \in C_{s,t}$.
    Since the graph induced by vertices $s, t, x_5, x_6, x_7$ is not $1$-stack (that is, outerplanar), at
    least one of edges, $(s, x_5)$, $(s, x_6)$, $(s, x_7)$, $(t, x_5)$, $(t, x_6)$, $(t, x_7)$, is a queue edge.
    However, this edge is covered by $(v_1, v_2) \in \Qh$, a contradiction.
    
    For the second part of the lemma, assume the order is $[v_1, s, t, v_2, x_1, x_2, x_3]$ (the proof for the other
    order is symmetric). 
    Suppose that none of the connectors is adjacent to both queue and stack edges. Hence, there are two connectors, say $x_1$ and
    $x_2$, with edges assigned to a queue or to a stack. However, 
    one of $(s, x_1)$ and $(t, x_2)$ is a queue edge, as the two edges cross. Similarly, one of edges 
    $(s, x_2)$ and $(t, x_1)$ is a stack edge, as the edges are nested, a contradiction.
\end{proof}    

\begin{lemma}
\label{lm:RB}
Assume that a vertex ordering of graph $\Gc$ contains $[v_1, u_1, s, t, u_2, v_2]$ with edges $(v_1, v_2) \in \Qh$, $(u_1, u_2) \in \Sh$
and twins $s, t$. Then $\Gc$ does not admit a mixed layout.
\end{lemma}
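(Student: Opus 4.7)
The plan is to apply Lemma~\ref{lm:RR} to the subsequence $[v_1, s, t, v_2]$ and then use the stack edge $(u_1, u_2)$ to force the status of the six connector edges of $s$ and $t$, ending with a simple nesting witness. Since the proof of Lemma~\ref{lm:RR1} only uses the relative order $v_1 < s < t < v_2$ together with the fact that $(v_1, v_2) \in \Qh$, it applies verbatim here; hence three connectors $x_1, x_2, x_3 \in C_{s,t}$ lie either all to the left of $v_1$ or all to the right of $v_2$. Reflecting the entire vertex ordering preserves both crossings and nestings (and hence the stack/queue partition), so I may assume without loss of generality that the ordering contains $[v_1, u_1, s, t, u_2, v_2, x_1, x_2, x_3]$.

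Next I will examine the six connector edges $(s, x_i), (t, x_i)$ for $i = 1, 2, 3$. Each such edge has its left endpoint, either $s$ or $t$, strictly inside the interval $(u_1, u_2)$, while its right endpoint $x_i$ lies strictly to the right of $u_2$; thus each of them crosses the stack edge $(u_1, u_2) \in \Sh$ and is therefore forced into $\Qh$. To finish, I exhibit a nested queue pair: in the order $s < t < x_1 < x_2$, the queue edge $(s, x_2)$ covers the queue edge $(t, x_1)$, contradicting the defining property of a queue. An equivalent closing step is to invoke Lemma~\ref{lm:RR2}, which produces a connector incident to both a stack and a queue edge, directly contradicting the fact that all six connector edges must be queue edges.

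The main obstacle I anticipate is essentially bookkeeping: one must verify that Lemma~\ref{lm:RR} remains applicable once $u_1, u_2$ are inserted between $v_1$ and $s$ (respectively between $t$ and $v_2$), and that the three connectors produced by Lemma~\ref{lm:RR1} really lie outside the stack-edge interval $[u_1, u_2]$ rather than being smuggled into $(v_1, u_1)$ or $(u_2, v_2)$. Both points reduce to the observation that Lemma~\ref{lm:RR1} places those connectors outside the larger interval $[v_1, v_2]$, which contains $[u_1, u_2]$. Once this is checked, the stack edge $(u_1, u_2)$ acts as a filter that pushes every connector edge into the queue, and the nesting pair closes the argument.
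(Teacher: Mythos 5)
Your proof is correct and essentially matches the paper's: the paper likewise applies Lemma~\ref{lm:RR1} to $[v_1,s,t,v_2]$ to push a connector outside $[v_1,v_2]$ and then uses Lemma~\ref{lm:RR2} to obtain a stack edge crossing $(u_1,u_2)$ --- which is exactly your ``equivalent closing step.'' Your primary closing (all six connector edges are forced into $\Qh$ by the stack edge, and then $(s,x_2)$, $(t,x_1)$ nest) is only a cosmetic variant that inlines the nesting argument from the proof of Lemma~\ref{lm:RR2} instead of citing it.
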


\begin{proof}
    By Lemma~\ref{lm:RR1} applied for vertices $v_1, s, t, v_2$, there exists a connector $x \in C_{s,t}$
    such that $x < v_1$ or $x > v_2$. By Lemma~\ref{lm:RR2},
    one of edges $(s, x)$, $(t, x)$ is a stack edge. However, this edge crosses
    stack edge $(u_1, u_2)$, a contradiction; see Fig~\ref{fig:lemmaRB}.
\end{proof}

\begin{figure}[t]
    \centering
    \begin{subfigure}[b]{.4\linewidth}
        \includegraphics[page=2,width=0.99\textwidth]{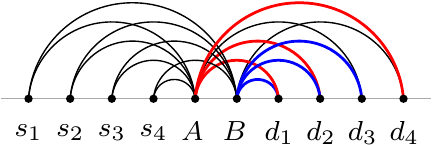}
        \caption{Lemma~\ref{lm:RB}}
        \label{fig:lemmaRB}
    \end{subfigure}
    \hfill
    \begin{subfigure}[b]{.4\linewidth}
        \includegraphics[page=3,width=0.99\textwidth]{pics/cases}
        \caption{Lemma~\ref{lm:R}}
        \label{fig:lemmaR}
    \end{subfigure}
    \caption{Impossible configurations for a mixed layout of graph $\Gc$, as
        shown by Lemmas~\ref{lm:RB} and \ref{lm:R}. Stack edges are blue and queue edges are red.}
    \label{fig:lemmas}
\end{figure}

\begin{lemma}
    \label{lm:R}
    Assume that for three pairs of twins $s_i$, $t_i$, $1 \le i \le 3$, a vertex ordering of graph $\Gc$
    contains $[v_1, s_i, t_i, v_2]$, where edge $(v_1, v_2) \in \Qh$. Then $\Gc$ does not admit a mixed layout.
\end{lemma}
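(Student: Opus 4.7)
The plan is to combine Lemma~\ref{lm:RR1} with a pigeonhole on the binary choice ``before $v_1$ vs.\ after $v_2$'' to extract two twin pairs whose bulk of connectors sits on a common side of $[v_1, v_2]$, and then to derive a contradiction by splitting on whether the two twin intervals are disjoint or nested. The role of the third twin pair is solely to make the pigeonhole work: once two pairs are on the same side, two pairs already suffice for a contradiction.

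First I would apply Lemma~\ref{lm:RR1} to each pair $(s_i, t_i)$, $i \in \{1, 2, 3\}$, with respect to the queue edge $(v_1, v_2)$; this forces at least three connectors of $C_{s_i, t_i}$ to lie entirely before $v_1$ or entirely after $v_2$. Three pairs distributed over two sides guarantee, by pigeonhole, that two pairs (after relabeling, pairs $1$ and $2$) share a side; by symmetry I may assume this side is ``after $v_2$''. A key preliminary step is to note that each twin edge $(s_i, t_i)$ must lie in $\Sh$, since $v_1 < s_i < t_i < v_2$ together with $(v_1, v_2) \in \Qh$ would otherwise produce a nested queue pair. Consequently $(s_1, t_1), (s_2, t_2) \in \Sh$ cannot cross, so the two twin intervals are either disjoint or nested.

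In the disjoint subcase $s_1 < t_1 < s_2 < t_2$, I would invoke Lemma~\ref{lm:RR2} on each of the two pairs to obtain connectors $x^{(i)} > v_2$ and twins $\alpha_i, \alpha'_i \in \{s_i, t_i\}$ with $(\alpha_i, x^{(i)}) \in \Sh$ and $(\alpha'_i, x^{(i)}) \in \Qh$. Disjointness forces $\alpha_1 < \alpha_2$ and $\alpha'_1 < \alpha'_2$. Then $x^{(1)} < x^{(2)}$ gives $\alpha_1 < \alpha_2 < x^{(1)} < x^{(2)}$, making the two stack edges cross, while $x^{(1)} > x^{(2)}$ gives $\alpha'_1 < \alpha'_2 < x^{(2)} < x^{(1)}$, making the two queue edges nest; either way the mixed-layout hypothesis fails.

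In the nested subcase $s_1 < s_2 < t_2 < t_1$, I would apply Lemma~\ref{lm:RR2} only to the inner pair $(s_2, t_2)$ to obtain a stack edge $(\alpha_2, x^{(2)})$ with $\alpha_2 \in \{s_2, t_2\}$ and $x^{(2)} > v_2$; then $s_1 < \alpha_2 < t_1 < x^{(2)}$ shows that $(\alpha_2, x^{(2)})$ crosses the stack edge $(s_1, t_1)$, a contradiction. The main obstacle is that the two subcases need different ingredients: the disjoint case relies on Lemma~\ref{lm:RR2} applied to both pairs and a short parity argument on stack/queue constraints for $x^{(1)}$ vs.\ $x^{(2)}$, whereas the nested case pivots on the preliminary observation that every twin edge is itself forced into $\Sh$, so that the Lemma~\ref{lm:RR2} stack edge of the inner pair collides with the outer twin edge.
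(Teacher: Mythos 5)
Your proof is correct and follows essentially the same route as the paper: twin edges are forced into $\Sh$, Lemma~\ref{lm:RR1} plus pigeonhole puts two pairs' connectors on a common side, and Lemma~\ref{lm:RR2} yields the final stack-crossing/queue-nesting contradiction. The only (cosmetic) difference is that the paper rules out nested twin edges up front by citing Lemma~\ref{lm:RB}, whereas you pigeonhole first and then dispose of the nested subcase by re-deriving that lemma's argument inline.
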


\begin{proof}
    Notice that all twin edges, $(s_i, t_i)$ for $1 \le i \le 3$, are stack edges, as they are covered by $(v_1, v_2) \in \Qh$.
    Moreover, the edges do not nest each other, as otherwise the two nested edges together with $(v_1, v_2)$ form
    a configuration as in Lemma~\ref{lm:RB}, which is impossible.
    Thus, we have three non-nested pairs of twins, that is, up to renumbering
    the order is $[v_1, s_1, t_1, s_2, t_2, s_3, t_3, v_2]$.
    
    Let us apply Lemma~\ref{lm:RR1} for the three pairs of twins and edge $(v_1, v_2) \in \Qh$. There are
    two triples of connectors, $x_1, x_2, x_3 \in C_{s_p,t_p}$ and $y_1, y_2, y_3 \in C_{s_q,t_q}$ for $p, q \in \{1,2,3\}$,
    that are all either before $v_1$ or after $v_2$ in the order. Without loss of generality, we assume
    $v_2 < x_j$ and $v_2 < y_j$ for all $1\le j \le 3$.
    
    By Lemma~\ref{lm:RR2} applied for twins $s_p$ and $t_p$, one of the connectors, say $x_1$, is adjacent to 
    a queue edge, $e_p$,  and to a stack edge, $f_p$. Similarly, a connector of $s_q$ and $t_q$, say $y_1$, is adjacent to 
    a queue edge, $e_q$, and to a stack edge, $f_q$; see Fig.~\ref{fig:lemmaR}.
    However, it is not possible to assign these four edges to $\Sh$ and $\Qh$:
    If $x_1 < y_1$, then the two queue edges, $e_p$ and $e_q$, nest; 
    If $x_1 > y_1$, then the two stack edges, $f_p$ and $f_q$, cross.
\end{proof}

Now we are ready to analyze the cases proving Theorem~\ref{thm:ce}.

\begin{cs}[$sABt$]
    \label{cs:1}
    Assume that the vertex ordering is $[s_i, A, B, t_i]$ for twins $s_i, t_i$ for all $1 \le i \le 4$.
    Then graph $\Gc$ does not admit a mixed layout.
\end{cs}    

\begin{proof}
    Let us assume that the vertex ordering is $[s_1, s_2, s_3, s_4, A, B, d_1, d_2, d_3, d_4]$, where 
    $d_i \in \{t_1, t_2, t_3, t_4\}$ for all $1 \le i \le 4$.
    Note that $\Gc$ contains edges $(A, d_i)$ and $(B, d_i)$ for all $1 \le i \le 4$.
    
    Start with a pair of edges $(A, d_4)$ and $(B, s_1)$; since they cross, one of the edges is a queue edge. Without
    loss of generality, we may assume
    that $(A, d_4) \in \Qh$. Hence, all edges covered by $(A, d_4)$ are stack edges; that is, 
    $(B, d_3)$, $(B, d_2)$, $(B, d_1) \in \Sh$. It follows that all edges crossing the three edges are in 
    the queue: $(A, d_2)$, $(A, d_1) \in \Qh$; see Fig.~\ref{fig:case1}.
    
    Now let $s_x$ be a twin of $d_2$. Edge $(s_x, d_2)$ is a stack edge since it covers $(A, d_1) \in \Qh$. However,
    $(s_x, d_2)$ crosses $(B, d_3) \in \Sh$, a contradiction.
\end{proof}    

\begin{figure}[t]
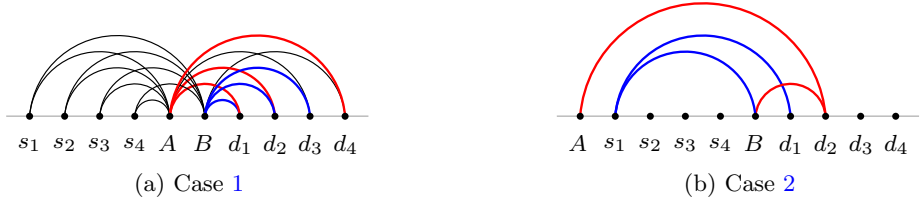

    \centering
    \begin{subfigure}[b]{.4\linewidth}
        \includegraphics[page=1,width=0.99\textwidth]{pics/cases}
        \caption{Case~\ref{cs:1}}
        \label{fig:case1}
    \end{subfigure}
    \hfill
    \begin{subfigure}[b]{.4\linewidth}
        \includegraphics[page=4,width=0.99\textwidth]{pics/cases}
        \caption{Case~\ref{cs:4}}
        \label{fig:case4}
    \end{subfigure}
    \caption{An illustration for the proofs of cases of Theorem~\ref{thm:ce}. 
        Stack edges are blue and queue edges are red.}
\end{figure}

\begin{cs}[$AsBt$]
    \label{cs:4}
    Assume that the vertex ordering is $[A, s_i, B, t_i]$ for twins $s_i, t_i$ for all $1 \le i \le 4$.
    Then graph $\Gc$ does not admit a mixed layout.
\end{cs}

\begin{proof}
    Let us assume that the vertex ordering is $[A, s_1, s_2, s_3, s_4, B, d_1, d_2, d_3, d_4]$, where 
    $d_i \in \{t_1, t_2, t_3, t_4\}$ for all $1 \le i \le 4$.

    Suppose that $e=(A, d_2)$ is a stack edge. Then $(B, d_3) \in \Qh$, as it crosses $e$.
    This is impossible, as twin edge $(s_x, d_4)$ (for some $1\le x \le 4$) crosses a stack edge, $(A, d_2)$, 
    and covers a queue edge, $(B, d_3)$. Hence, $(A, d_2)$ is a queue edge.
    
    Since $(A, d_2) \in \Qh$, all nested edges are in the stack; in particular, $(s_1, B) \in \Sh$ and $(s_y, d_1) \in \Sh$, where
    $s_y$ is the twin of $d_1$. Notice that in order for edges $(s_1, B)$ and $(s_y, d_1)$ to be non-crossing, 
    $s_y$ should be equal to $s_1$; see Fig.~\ref{fig:case4}. It follows that edge $(B, d_2)$ is a queue edge since
    it crosses $(s_1, d_1)$.
    Finally, we observe that the twin edge of $d_3$, $(s_z, d_3)$ for some $2 \le z \le 4$,
    crosses a stack edge, $(s_1, d_1)$, and covers a queue edge, $(B, d_2)$, which is not possible.
\end{proof}    

\begin{cs}[$AstB$]
    \label{cs:3}
    Assume that the vertex ordering is $[A, s_i, t_i, B]$ for twins $s_i, t_i$ for all $1 \le i \le 4$.
    Then graph $\Gc$ does not admit a mixed layout.
\end{cs}

\begin{proof}
    Let us assume that the vertex ordering is $[A, d_1, d_2, \dots, d_8, B]$,  
    where $s_i, t_i \in \{d_1, \dots, d_8\}$ for all $1 \le i \le 4$.
    
    Since edges $(A, d_8)$ and $(d_1, B)$ cross, one of them is a queue edge. Without
    loss of generality, we may assume $(A, d_8) \in \Qh$.
    Consider seven vertices $d_1, d_2, \dots, d_7$. It is easy to see that they form three
    pairs of twins (while the forth pair is formed with $d_8$). By Lemma~\ref{lm:R} applied for
    $(A, d_8) \in \Qh$ and the twins, it is impossible.
\end{proof}    

\begin{cs}[$ABst$]
    \label{cs:2}
    Assume that the vertex ordering is $[A, B, s_i, t_i]$ for twins $s_i, t_i$ for all $1 \le i \le 4$.
    Then graph $\Gc$ does not admit a mixed layout.
\end{cs}    

\begin{proof}
    Let us assume that the vertex ordering is $[A, B, d_1, d_2, \dots, d_8]$,  
    where $s_i, t_i \in \{d_1, \dots, d_8\}$ for all $1 \le i \le 4$.

    Suppose that edge $e=(A, d_7) \in \Qh$; then edge $(B, d_6) \in \Sh$, as it is covered by $e$. Additionally,
    we have five vertices, $d_1, d_2, d_3, d_4, d_5$, which form at least one pair of twins.
    This pair of twins together with $(A, d_7) \in \Qh$ and $(B, d_6) \in \Sh$ form a configuration as in Lemma~\ref{lm:RB},
    which is impossible. Therefore, edge $(A, d_7) \in \Sh$ and the crossing edge, $(B, d_8)$, is a queue edge.
    
    Consider vertices $d_1, \dots, d_7$. There are three pairs of twins formed by the vertices; all of the
    pairs are covered by $(B, d_8) \in \Qh$, contradicting Lemma~\ref{lm:R}.
\end{proof}    

\section{Mixed Layouts of Planar Subdivisions}
\label{sect:sub}

In this section we prove Theorem~\ref{thm:11}. To this end,
we utilize a special representation of a planar graph, which 
is called \df{ordered concentric}.
In such a representation, the vertices of a graph are 
laid out on a set of circles around a specified origin vertex, so that each circle contains exactly
the vertices with the same graph-theoretic distance to the origin; see Fig.~\ref{fig:cr}. To construct such a
representation, we begin with an arbitrary 
vertex of the graph as the origin, and consider a planar embedding of the graph with the origin on the outer face.
The layers of vertices are formed by a breadth-first search starting at the origin, and
the edges are routed without crossings and connect vertices of the same layer or vertices
of two consecutive layers. Formally an ordered concentric representation is defined next.
We assume that for a graph $G=(V, E)$, $dist_G(u, v)$ is the graph-theoretic
distance between vertices $u, v \in V$.

\begin{figure}[t]
    \centering
    \begin{subfigure}[b]{.48\linewidth}
        \includegraphics[page=1,width=\textwidth]{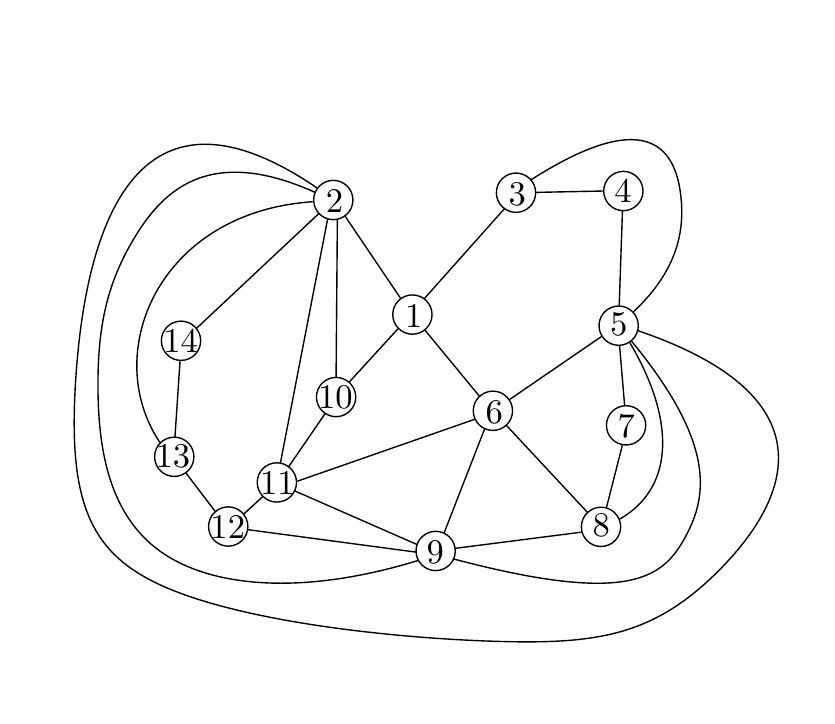}
        \caption{}
        \label{fig:cr1}
    \end{subfigure}
\hfill
    \begin{subfigure}[b]{.48\linewidth}
        \includegraphics[page=2,width=\textwidth]{pics/levels}
        \caption{}
        \label{fig:cr2}
    \end{subfigure}
    \caption{(a)~A plane graph and (b) its ordered concentric representation with the origin $v^*=1$.
    Concentric circles are shown dashed.}
    \label{fig:cr}
\end{figure}

\begin{definition}
    \label{def:con}
    Let $G=(V, E)$ be a connected planar graph with a specified vertex $v^* \in V$. An \dfb{ordered concentric
    representation of $G$ with the origin $v^*$}, denoted by $\Gamma^o$, is a drawing of $G$ with the following properties:

\begin{enumerate}[label={(\roman*)}]
    \item The drawing is planar with vertex $v^*$ lying on the outer face;
    
    \item $V_i = \{x \in V~|~dist_{G}(v^*, x) = i\}$ for all $0 \le i \le k$ and some $k \in \mathbb{N}$.
    The set $V_i$ is called the \dfb{\emph{i}-th level} of $G$. For every level $V_i$, $1 \le i \le k$, 
    the vertices of $V_i$ are arranged in
    a sequence $x_1, x_2, \dots ,x_r$ with $|V_i| = r$. 
    
    In the drawing, the vertices are 
    laid out on a closed curve in the order; the curve is called the \dfb{\emph{i}-th circle}.
    The vertices of $V_j$, $j < i$ are located inside the area bounded by the $i$-th circle,
    while the vertices of $V_j$, $j > i$ are located outside the area bounded by the $i$-th circle.

    \item For every edge $(u, w) \in E$ with $u \in V_i$ and $w \in V_j$, it holds that either $i=j$,
    in which case the edge is called a \dfb{level} edge, or $|i-j|=1$, corresponding to a \dfb{non-level} edge.
    
    Every level edge $(u, w) \in E$, $u, w \in V_i$ is realized as a curve routed outside the $i$-th circle.
    Every non-level edge $(u, w) \in E$, $u \in V_i$, $w \in V_{i+1}$ is realized as a curve consisting of
    at most two pieces: the first (required) piece is routed between the $i$-th and the $(i+1)$-th circles, and
    the second (optional) piece is routed outside the $(i+1)$-th circle.
\end{enumerate}    
\end{definition}    

Notice that the notion of concentric representations is related to radial drawings~\cite{BBF05}. 
The main difference is that in radial drawings ``monotonicity'' of edges is required;
equivalently, every edge shares at most one point with a circle in a radial drawing. In contrast, some edges of
a concentric representation may cross a circle multiple times; for example, see an
edge $(5, 8)$ in Fig.~\ref{fig:cr2}. As a result, a radial drawing may not exist for a planar graph, while
an ordered concentric representation can always be constructed as shown by Lemma~\ref{lm:con}.

Another closely related concept is a (non-ordered) concentric representation of a planar graph~\cite{Ull84,Hro13}.
Such a representation is defined similarly, except that the vertices of each level form a cyclic sequence and
the origin vertex is not required to lie on the outer face. In a sense, Definition~\ref{def:con} provides a
refinement of a concentric representation, as it dictates the (non-cyclic) order of the vertices of every level.
The next lemma shows that every planar graph admits an ordered concentric representation.

\begin{lemma}
    \label{lm:con}    
    For every connected planar graph $G=(V, E)$ and every $v^* \in V$, there exists an ordered concentric representation of $G$ 
    with the origin $v^*$.
\end{lemma}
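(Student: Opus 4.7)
The plan is to construct the representation explicitly from a planar embedding of $G$ in which $v^*$ lies on the outer face. Running BFS from $v^*$ fixes the levels $V_0, V_1, \dots, V_k$. I would next build a BFS spanning tree $T$ that inherits the rotation system of the embedding, and define the linear order on each $V_i$ via a preorder DFS of $T$: starting at $v^*$, visiting children in the cyclic order given by the embedding, and beginning at $v^*$ with the child adjacent to the outer face. The order on $V_i$ is the order in which vertices at level $i$ are first encountered. Then I would place $v^*$ at the origin and the vertices of each $V_i$ on a circle of radius $i$ in this order.

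I would next route the edges. Tree edges are drawn as direct curves in the annulus between consecutive circles; since a DFS visits subtrees contiguously, the descendants of each $u \in V_i$ project onto a contiguous arc of every circle $j > i$, so tree edges produce no crossings. Level edges $(u,w)$ with $u,w \in V_i$ are routed outside the $i$-th circle, following the corresponding curve of the embedding deformed into the annulus between circles $i$ and $i+1$ (or outside the outermost circle when $i=k$). A non-tree non-level edge $(u,w)$ with $u \in V_i$, $w \in V_{i+1}$ is routed primarily in the annulus between the $i$-th and the $(i+1)$-th circles; when the DFS position of $w$ lies outside the arc spanned by the subtree of $u$, I would use the optional detour outside the $(i+1)$-th circle to reach $w$, as permitted by Definition~\ref{def:con}.

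The main obstacle is verifying that these routings together form a plane drawing. The key observation is that the DFS-based linearization amounts to choosing a radial cut in the cyclic order on each level, determined by the outer face of the embedding at $v^*$. Each non-tree edge can then be realized as a continuous deformation of the corresponding curve in the embedding: the edge stays on the same side of every tree edge and every other non-tree edge as it does in the embedding, so non-crossings are preserved. Non-level edges that cross the radial cut are precisely the ones that require the optional detour outside the $(i+1)$-th circle, and each such detour is taken on the side prescribed by the embedding, so different detour pieces remain pairwise non-crossing. Planarity of the underlying embedding thus transfers to the ordered concentric representation, yielding the required drawing.
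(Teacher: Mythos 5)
Your setup (BFS levels, an ordering of each level inherited from a planar embedding via the rotation system, tree edges routed crossing-free in the annuli) matches the spirit of the paper's proof, and the observation that a DFS preorder makes each subtree occupy a contiguous arc on every level is correct. The gap is in the verification that the remaining edges can actually be routed as you prescribe. Your argument is that each non-tree edge is ``a continuous deformation of the corresponding curve in the embedding'' and therefore stays on the same side of everything else. But the edges that need a detour are precisely the ones for which this is false: cutting the cyclic order of a level into a linear order forces such an edge out of the region it occupied in the embedding and into the region outside the $(i{+}1)$-th circle, which is populated by all vertices and edges of levels $\ge i+1$. You check that detour pieces avoid \emph{each other}, but never that they avoid the tree edges, level edges, and vertices of the higher levels they must now pass through; that is the actual content of the lemma, and it is not supplied by the deformation heuristic. (Your two characterizations of which edges need a detour --- ``$w$ lies outside the arc spanned by the subtree of $u$'' versus ``the edge crosses the radial cut'' --- are also not equivalent, which is a symptom of the same missing analysis.)

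There is moreover a concrete false step: you confine every level edge to the annulus between circles $i$ and $i+1$. Take $K_4$ on $\{v^*,a,b,c\}$ with a pendant vertex $p$ adjacent to $b$. Then $V_1=\{a,b,c\}$, $V_2=\{p\}$, and in any ordering of $V_1$ one of the level edges, say $(a,c)$, covers $b$. If $(a,c)$ stays inside the annulus it must cross the tree edge $(b,p)$; the only escape is to route $(a,c)$ outside the second circle, around $p$, which Definition~\ref{def:con} permits for level edges but your scheme forbids. The paper avoids all of these difficulties by not routing non-tree edges locally at all: it draws the BFS tree on horizontal lines and then inserts the remaining edges one at a time through a common free region beyond the outermost level, maintaining an explicit invariant (each current face retains a reachable vertical segment in that region) that guarantees every insertion is crossing-free and automatically yields the one-piece/two-piece structure required by the definition. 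To repair your proof you would need either to adopt such a global routing with an invariant, or to carry out a genuine case analysis showing that your local routings (including level edges that must escape their annulus) can always be completed planarly.
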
    

\begin{proof}
    We start by constructing a breadth-first search tree, $T$, of $G$ rooted at $v^*$.    
    Consider an arbitrary combinatorial embedding of $G$ (that is, cyclic orders of edges around each vertex), 
    and draw $T$ on a set of horizontal lines respecting the planar embedding. A line with $y$-coordinate $=i$
    contains vertices $V_i$, $0 \le i \le k$ with $k = \max_{u \in V} dist_G(v^*, u)$. The order
    of the vertices along each line is defined by the embedding of $G$.
    Notice that the drawing is planar and satisfies Definition~\ref{def:con}. Here the circles are formed by 
    connecting the leftmost and the rightmost vertices of each horizontal line with a curve surrounding the drawing; see Fig.~\ref{fig:con1}. All the edges of $T$ are drawn as straight-line segments between consecutive
    levels, that is, they are non-level edges.
    Next we show how to draw the remaining edges of $E$ while preserving the properties of Definition~\ref{def:con}.
    
    To this end, we maintain the following invariant: For every face $f$ of the currently drawn graph, $H$,
    there exists a vertical line segment of length $\eps > 0$ such that every vertex $u \in f$ can be connected to
    every point of the segment via a curve, which is monotone in the $y$-direction, while avoiding crossing with the edges of $H$.
    Since the drawing of $T$ defines only one face, it is clear that the segment with endpoints $(x, k)$ and $(x, k+1)$
    (for an arbitrary $x \in \mathbb{R}$) satisfies the invariant; see Fig.~\ref{fig:con1}. Let us show how to draw the next
    edge. Assume that an edge, $(u, w) \in E$, belongs to a face $f$ of $H$. Due to the invariant, there exists a
    line segment with endpoints $p_0=(x, y_0)$ and $p_1=(x, y_1)$ (for some $x, y_0, y_1 \in \mathbb{R}$) that is reachable
    from both $u$ and $w$. We identify point $p=(x, (y_0 + y_1) / 2)$ on the segment and route the edge
    along the curves connecting $u$ to $p$ and then $p$ to $w$. The edge splits $f$ into two faces
    such that the condition of the invariant can be satisfied using segments $p_0, p$ and $p, p_1$; see Fig.~\ref{fig:con2}.    
    To complete the proof, we
    observe that the edge also satisfies Definition~\ref{def:con}. If $dist_G(v^*, u) = dist_G(v^*, w)$, then $(u, w)$
    is a level edge. 
    Otherwise, if $|dist_G(v^*, u) - dist_G(v^*, w)| = 1$, then $(u, w)$ is a non-level edge
    represented by a two-piece curve.
\end{proof}    

\begin{figure}[t]
    \centering
    \begin{subfigure}[b]{.48\linewidth}
        \includegraphics[page=1,width=\textwidth]{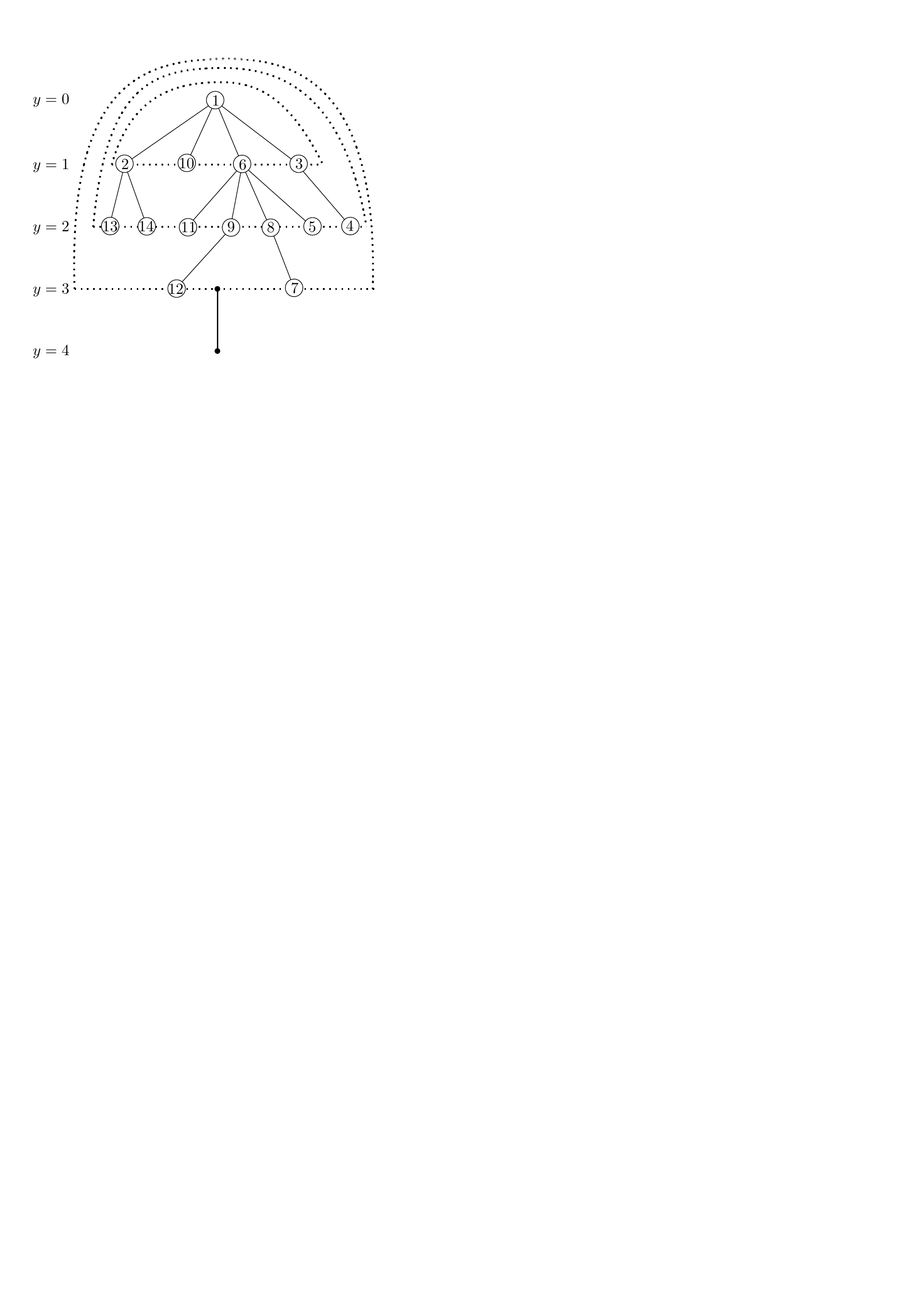}
        \caption{}
        \label{fig:con1}
    \end{subfigure}
\hfill
    \begin{subfigure}[b]{.48\linewidth}
        \includegraphics[page=2,width=\textwidth]{pics/levels2}
        \caption{}
        \label{fig:con2}
    \end{subfigure}
    \caption{(a)~A starting point for construction an ordered concentric representation for graph shown
        in Fig.~\ref{fig:cr1}, as described in Lemma~\ref{lm:con}.
        (b)~Maintaining the invariant of Lemma~\ref{lm:con} while drawing edge $(2,9)$ and then edge $(5,8)$.}
    \label{fig:con}
\end{figure}

Now we are ready to prove the main result of the section. Our construction
of a mixed layout for a given graph $G$ is as follows. We start with an ordered concentric representation, $\Gamma^o$, of $G$, which
is created by a breadth-first search starting at an arbitrary vertex $v^*$. We distinguish three types of edges in~$\Gamma^o$:
(a)~level edges with both endpoints belonging to the same level of $\Gamma^o$, 
(b)~\df{short} non-level edges whose curves are routed between consecutive levels of~$\Gamma^o$, and
(c)~\df{long} non-level edges whose curves cross some circles of $\Gamma^o$.
Our goal is to keep the edges of type (a) in the stack and the
edges of type (b) in the queue. The edges of type
(c) shall be subdivided into a stack edge and a queue edge. The order of vertices in the mixed
layout is constructed from the levels of $\Gamma^o$: we place the origin $v^*$, which is followed by 
the vertices of $V_1$ (in the order given by the ordered concentric representation), followed by the
vertices of $V_2$ etc. The correctness of the mixed layout follows from the planarity of $\Gamma^o$; see Fig.~\ref{fig:pr}.
Next we provide a formal proof.

\begin{backInTimeThm}{thm-11}
    \begin{theorem}
        Every planar graph admits a mixed $1$-stack $1$-queue subdivision with one division vertex per edge.
    \end{theorem}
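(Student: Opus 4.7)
The plan is to construct the mixed layout directly from the ordered concentric representation $\Gamma^o$ of $G$ supplied by Lemma~\ref{lm:con}. Pick an arbitrary origin $v^* \in V$ and invoke Lemma~\ref{lm:con} to obtain $\Gamma^o$ with levels $V_0 = \{v^*\}, V_1, \ldots, V_k$, each laid out as a linear sequence along its circle. Classify every edge of $G$ as a level edge (type a), a short non-level edge (type b), or a long non-level edge (type c) according to Definition~\ref{def:con}. Subdivide each type-(c) edge by inserting a single division vertex at the unique point where its curve crosses the outer of its two circles. This yields a subdivision with at most one division vertex per edge, which is what the theorem demands.

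Define the linear order $<$ of the resulting graph by concatenating $v^*$, then the augmented sequence on the first circle, then the augmented sequence on the second circle, and so on up to the $k$-th circle, where the augmented sequence on the $i$-th circle consists of the vertices of $V_i$ together with the type-(c) division vertices lying on that circle, inserted in the order induced by their positions along the circle. Assign edges of the subdivision to the stack $\Sh$ and the queue $\Qh$ as follows: every type-(a) edge goes to $\Sh$; every type-(b) edge goes to $\Qh$; for a type-(c) edge split at a division vertex $d$ on circle $i+1$, the first half (routed between circles $i$ and $i+1$, from a $V_i$ vertex to $d$) goes to $\Qh$ and the second half (routed outside circle $i+1$, from $d$ to a $V_{i+1}$ vertex) goes to $\Sh$.

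Correctness will follow from the planarity of $\Gamma^o$. For the stack, every edge in $\Sh$ is an arc routed outside some circle $i$ with both endpoints on that circle; two such arcs on the same circle are non-crossing in $\Gamma^o$ and therefore non-crossing in the linear order, since the linear order on $V_i$ augmented with circle-$i$ division vertices matches the order along the circle, while two arcs on different circles have disjoint endpoint intervals in the layout. For the queue, every edge in $\Qh$ is a curve inside some annulus between circles $i$ and $i+1$, with one endpoint on circle $i$ and the other on circle $i+1$; two such curves in the same annulus are non-crossing in $\Gamma^o$, and non-crossing curves in a strip from top boundary to bottom boundary force the matching between boundary orders to be order-preserving, which translates exactly into the non-nesting condition on the linear order. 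Two queue edges in different annuli cannot nest, because the right endpoint of a queue edge in annulus $(i,i+1)$ always precedes the right endpoint of one in annulus $(j,j+1)$ for $j > i$.

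The main obstacle is the faithful translation of the planar combinatorics of $\Gamma^o$ into the linear-order crossing/nesting statements, particularly the insertion of type-(c) division vertices into the augmented sequence on each circle so that the linear order truly matches the planar drawing, and the verification that non-crossing curves in an annulus enforce non-nesting rather than nesting. The latter is justified by the strip picture produced in the proof of Lemma~\ref{lm:con}, where each annulus is essentially a horizontal rectangle and non-crossing curves between its top and bottom boundaries must preserve the left-to-right order.
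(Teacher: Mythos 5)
Your proposal is correct and follows essentially the same route as the paper: the same ordered concentric representation from Lemma~\ref{lm:con}, the same three-way edge classification, the same choice of subdivision point on the outer circle of each long edge, the same level-by-level vertex order, and the same stack/queue assignment with correctness derived from planarity of $\Gamma^o$. The only difference is presentational: you spell out slightly more carefully why non-crossing curves in an annulus translate to non-nesting in the linear order, a point the paper's proof states more tersely.
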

\end{backInTimeThm}

\begin{proof}
    Let $G = (V, E)$ be a planar graph and $v^* \in V$. We assume that $G$ is connected; otherwise, 
    each connected component of $G$ can be processed individually.
    Using Lemma~\ref{lm:con}, construct an ordered 
    concentric representation $\Gamma^o$ of 
    the graph with the origin $v^*$ and a set of levels $V_i$, $0 \le i \le k$ for some 
    $k \ge 0$.
    
    Consider an edge $(u, w) \in E$ with $u \in V_i$ and $w \in V_j$. Since the levels are constructed
    with a breadth-first search, it holds that $i=j$ or $|i-j|=1$. 
    Let $P_{u,w}$ be the (ordered) sequence of levels 
    of $\Gamma^o$ such that the corresponding circles share a point with the curve realizing edge $(u, w)$. 
    By Definition~\ref{def:con}, $E = E_a \cup E_b \cup E_c$, where
    \begin{itemize}
        \item $E_a = \{(u, w) : u, w \in V_i \text { for some } 1 \le i \le k$ with $P_{u,w}=(V_i, \dots, V_i)$ in $\Gamma^o \}$;
        \item $E_b = \{(u, w) : u \in V_i, w \in V_{i+1}$ with $P_{u,w}=(V_i, V_{i+1})$ in $\Gamma^o \}$;
        \item $E_c = \{(u, w) : u \in V_i, w \in V_{i+1}$ with $P_{u,w}=(V_i, V_{y_1}, \dots, V_{y_d}, V_{i+1})$ in $\Gamma^o$ 
        for some $d \ge 1$, $y_1 = i+1$ and $y_t > i+1$ for $2 \le t \le d \}$.
    \end{itemize}

    We construct a subdivision $G^s=(V^s, E^s)$ as follows. For an edge $e=(u, w) \in E_c$, 
    consider the crossing point, $y_1(e)$, between the $(i+1)$-th circle and the curve realizing $e$ in $\Gamma^o$
    (which corresponds to the second element, $V_{y_1}$, of $P_{u, w}$).
    Let $$V^s = V \cup \bigcup_{e \in E_c} \{y_1(e)\} \text{ and } E^s = E_a \cup E_b \cup E_{c_1} \cup E_{c_2}, \text{ where }$$
    $$E_{c_1} = \bigcup_{e \in E_c} \{(u, y_1(e))\} \text{ and }
    E_{c_2} = \bigcup_{e \in E_c} \{(y_1(e), w)\}.$$
    
    
    In order to construct a mixed layout of $G^s$, we use the following order:
    $$
        \sigma = (v^*,~~ x_1^1, x_2^1, \dots, x_{r_1}^1,~~ x_1^2, x_2^2, \dots, x_{r_2}^2,~~ x_1^3, x_2^3, \dots, x_{r_3}^3, \dots),
    $$
    where $\{x_1^i, x_2^i, \dots, x_{r_i}^i\} = V_i^s \supseteq V_i$ are the vertices of the $i$-th level of $G^s$ 
    in the order given by $\Gamma^o$.
    All edges of $E_a$ and $E_{c_2}$ are stack edges; that is, $e \in \Sh$ for every $e \in E_a \cup E_{c_2}$.
    All the remaining edges are queue edges; that is, $e \in \Qh$ for every $e \in E_b \cup E_{c_1}$. Next we prove
    the correctness of the construction.
    
    Let us show that all stack edges are crossing-free with respect to the specified order, $\sigma$.
    Assume two edges, $(u_1, w_1), (u_2, w_2) \in \Sh$, cross each other, that is
    $u_1 < u_2 < w_1 < w_2$ with respect to $\sigma$.
    Observe that all edges in the stack are the edges of the same level in the ordered concentric representation.
    Thus, $u_1, w_1 \in V_i^s$ and $u_2, w_2 \in V_j^s$ for some $0 \le i, j \le k$. However, the levels
    are arranged consecutively in $\sigma$, which means that $i=j$ and two edges of the same level cross. This is impossible, as
    all vertices of the same level of $\Gamma^o$ and the corresponding level edges form an outerplanar graph by Definition~\ref{def:con}.
    
    Finally, let us show that all queue edges are non-nested with respect to $\sigma$. 
    Assume that two edges, $(u_1, w_1), (u_2, w_2) \in \Qh$, nest each other so that
    $u_1 < u_2 < w_2 < w_1$. Since the queue edges belong to consecutive levels in the ordered concentric representation,
    it holds that $u_1 \in V_i^s$, $w_1 \in V_{i+1}^s$, $u_2 \in V_j^s$, $w_2 \in V_{j+1}^s$ for some
    $0 \le i, j \le k$. Since the levels do not overlap in $\sigma$, it holds that $i = j$.
    Hence, the two edges are routed between the same consecutive levels, $V_i^s$ and $V_{i+1}^s$, and therefore, cross
    each other, which violates the planarity of $\Gamma^o$.
\end{proof}    

\begin{figure}[t]
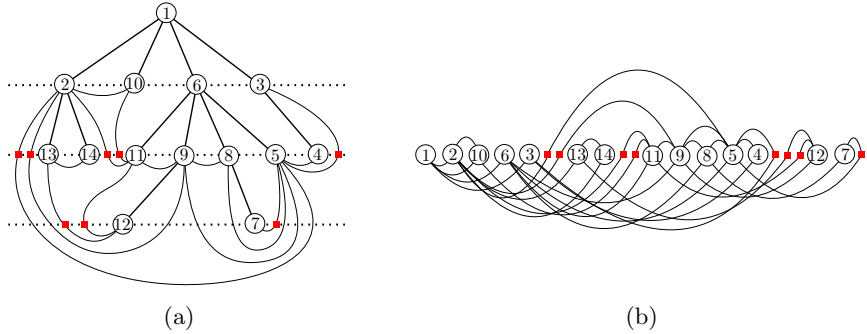

    \centering
    \begin{subfigure}[b]{.495\linewidth}
        \includegraphics[page=3,width=\textwidth]{pics/levels2}
        \caption{}
        \label{fig:pr1}
    \end{subfigure}
    \begin{subfigure}[b]{.495\linewidth}
        \includegraphics[page=4,width=\textwidth]{pics/levels2}
        \caption{}
        \label{fig:pr2}
    \end{subfigure}
    \caption{(a)~An ordered concentric representation for graph shown
        in Fig.~\ref{fig:cr1}. Red squares are subdivision vertices introduced for long non-level edges.
        (b)~A~mixed $1$-stack $1$-queue layout of the graph constructed as described in Theorem~\ref{thm:11}.}
    \label{fig:pr}
\end{figure}

\section{Discussion}
\label{sect:disc}

In this paper we resolved a conjecture by Heath and Rosenberg~\cite{HR92} by providing a graph
that does not admit a mixed $1$-stack $1$-queue layout. The graph contains $173$ vertices, and
a reasonable question is what is the size of the smallest counterexample. In an attempt to answer
the question, we implemented an exhaustive search 
algorithm\footnote{An online tool and the source code for testing linear embeddability of graphs is available at \url{http://be.cs.arizona.edu}}
(based on the SAT formulation of
the linear embedding problem suggested by Bekos, Kaufmann, and Zielke~\cite{BKZ15}) and
tested {\it all} $977,526,957$ maximal planar graphs with $|V| \le 18$. It turns out that
all such graphs have a mixed $1$-stack $1$-queue layout. The evaluation suggests that mixed
layouts are more ``powerful'' than pure stack and queue layouts, as there exist fairly small
graphs that do not admit $2$-stack and $2$-queue layouts.
The smallest planar graph requiring three stacks contains $11$ vertices, and the smallest planar graph
requiring three queues contains $14$ vertices; 
see Figs.~\ref{fig:stack3}, \ref{fig:queue3} and Appendix~\ref{sec:app}.
We were able to find a smaller counterexample for Conjecture~\ref{conj:HR}; see Fig.~\ref{fig:mixed3}. This instance consists
of $|V|=37$ vertices and $|E|=77$ edges, and has a similar structure as the graph in Theorem~\ref{thm:ce}.
However, showing that the graph does not admit a mixed layout requires significantly more effort.

\begin{figure}[t]
    \centering
    \begin{subfigure}[b]{.23\linewidth}
        \includegraphics[page=1,width=0.9\textwidth]{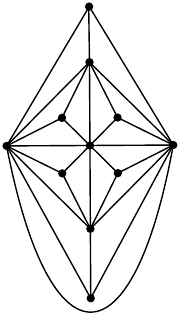}
        \caption{}
        \label{fig:stack3}
    \end{subfigure}
    \hfill
    \begin{subfigure}[b]{.32\linewidth}
        \includegraphics[page=2,width=0.9\textwidth]{pics/smallest_ce}
        \caption{}
        \label{fig:queue3}
    \end{subfigure}
    \hfill
    \begin{subfigure}[b]{.34\linewidth}
        \includegraphics[page=3,width=0.9\textwidth]{pics/smallest_ce}
        \caption{}
        \label{fig:mixed3}
    \end{subfigure}
    \caption{The smallest planar graphs that require (a)~3 stacks, (b)~3 queues. 
        (c)~A~graph with $|V|=37$ and $|E|=77$ that does not admit a mixed layout.}
\end{figure}

An interesting future direction is to consider {\it bipartite} planar graphs. We noticed that
all our counterexamples contain triangles, which seem to be important for non-embeddability.
Based on our experiments, we conjecture that every bipartite planar graph admits a mixed $1$-stack $1$-queue layout.
Such a result would strengthen Theorem~\ref{thm:11}, as a subdivision of a graph with one 
vertex per edge is clearly bipartite. Also observe that a pure $2$-stack layout exists for every 
bipartite planar graph, as shown by Overbay~\cite{Over98}.

\begin{conj}
    Every bipartite planar graph admits a mixed $1$-stack $1$-queue layout.
\end{conj}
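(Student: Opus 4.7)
The plan is to leverage the ordered concentric representation of Lemma~\ref{lm:con} together with the parity constraint that bipartiteness imposes on a breadth-first search. If $G$ is bipartite planar and $v^* \in V$ is an arbitrary vertex, then any two vertices in the same BFS level $V_i$ share the parity of their distance from $v^*$ and hence lie on the same side of the bipartition. Consequently, \emph{there are no level edges} in any ordered concentric representation $\Gamma^o$ of $G$: the set $E_a$ from the proof of Theorem~\ref{thm:11} is empty, and the edge set decomposes only into short non-level edges (routed between two consecutive circles) and long non-level edges (routed outside their upper circle). In other words, the only obstruction that forced subdivision in Theorem~\ref{thm:11} is the presence of long edges, so the conjecture reduces to showing that in the bipartite case these long edges can be absorbed into the stack without subdivision.

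I would reuse the linear order $\sigma = (v^*, V_1, V_2, \dots, V_k)$ from the proof of Theorem~\ref{thm:11}, place all short non-level edges in the queue $\Qh$, and place all long non-level edges in the stack $\Sh$. Queue correctness is inherited directly: two short edges that nested in $\sigma$ would have to cross in $\Gamma^o$, which is forbidden by planarity. To make the stack assignment work, I would try to strengthen Lemma~\ref{lm:con} for bipartite graphs as follows: there exists an ordered concentric representation in which all long non-level edges between $V_i$ and $V_{i+1}$ escape the $(i{+}1)$-th circle on a common side, say the right end of that circle in the horizontal drawing of Lemma~\ref{lm:con}. If this can be achieved, then the long edges sharing a level pair nest in $\sigma$, long edges from distinct level pairs are separated by the block structure of $\sigma$, and the stack becomes crossing-free.

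The hard part will be establishing this strengthened representation in the presence of $2$-cuts and large non-triangular faces typical of bipartite planar graphs, where several long edges may be forced into a common face and compete for the same ``escape side.'' My backup plan is a two-stage reduction: first, decompose $G$ along separation pairs (via an SPQR-tree) into $3$-connected bipartite blocks and argue that mixed layouts glue across shared separation pairs; second, for each such $3$-connected block (which is a subgraph of a planar quadrangulation) build a canonical vertex ordering in the spirit of de~Fraysseix, Pach, and Pollack, chosen so that each newly appended vertex contributes at most one backward edge to the stack and all other new edges are forward edges assigned to the queue. Should a general proof remain out of reach, the conjecture can at least be sharply constrained by extending the SAT-based exhaustive search of Section~\ref{sect:disc} to bipartite planar graphs of larger size, an enumeration that the absence of odd cycles should make substantially more tractable than the general planar case.
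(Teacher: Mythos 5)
This statement is presented in the paper as an \emph{open conjecture}: the author offers only experimental evidence (the exhaustive search over small maximal planar graphs and the observation that all known counterexamples contain triangles) and no proof. There is therefore no argument in the paper to compare yours against, and your text must be judged as a standalone attempt. As written it is a research plan rather than a proof: its one solid step is the observation that bipartiteness forces $E_a = \emptyset$ in any BFS-levelled representation (correct, since each level $V_i$ lies entirely in one side of the bipartition, so a level edge would join two vertices of the same colour), which cleanly isolates the long non-level edges as the only obstruction left over from Theorem~\ref{thm:11}.

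The genuine gap is the ``strengthened Lemma~\ref{lm:con}'' on which everything else rests. In the construction of Lemma~\ref{lm:con}, the point where a long edge $(u,w)$, $u \in V_i$, $w \in V_{i+1}$, escapes the $(i{+}1)$-th circle is not a free parameter: it is dictated by which face of the partially drawn graph contains the edge and hence by which gap between consecutive vertices of $V_{i+1}$ the curve must thread. Two long edges lying in different faces can be forced through different gaps, so the claim that all of them can be made to exit ``on a common side'' is unsupported and quite possibly false in general. Moreover, even granting it, your crossing analysis is incomplete in two places. First, the stack-correctness argument of Theorem~\ref{thm:11} relies on every stack edge having both endpoints in a single block of $\sigma$ (the $E_{c_2}$ halves behave as level edges of $V_{i+1}$); your undivided long edges span two blocks, so that argument does not transfer, and nesting of same-pair long edges would require showing that the cyclic order of the $u$-endpoints on circle $i$ (constrained by planarity inside the annulus) is the reverse of that of the $w$-endpoints on circle $i+1$ (constrained by planarity outside it) --- a compatibility you do not establish. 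Second, long edges from the pair $(V_i, V_{i+1})$ and from the adjacent pair $(V_{i+1}, V_{i+2})$ both have an endpoint in the block $V_{i+1}$ and can cross there, so ``long edges from distinct level pairs are separated by the block structure of $\sigma$'' is simply false for consecutive pairs. The SPQR/canonical-ordering backup plan is too sketchy to evaluate. The conjecture remains open.
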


Another future direction is to study mixed $s$-stack $q$-queue layouts of planar graphs. What 
are possible values of $s$ and $q$ such that there exists a mixed $s$-stack $q$-queue layout for every
planar graph? By the result of Yannakakis~\cite{Yan89}, we know that $s=4, q=0$ is a valid option, while
Theorem~\ref{thm:ce} shows that $s=1, q=1$ is not sufficient. Here it is worth mentioning a result
by Auer~\cite{Auer14} who shows that every planar graph with a Hamiltonian path admits a mixed layout 
with $s=2$ and $q=1$. However it is open whether there exists some $s > 0$ and $q > 0$ with $2 < s + q \le 4$ realizing
all planar graphs.

\bibliographystyle{splncs03}
\bibliography{refs}

\newpage
\appendix
\chapter*{\appendixname}
\section{Enumeration of Maximal Planar Graphs}
\label{sec:app}

We implemented an algorithm for constructing \emph{optimal} linear embeddings.
It is based on the SAT formulation of the problem and is capable to compute
optimal linear embeddings of graphs with hundreds of vertices within several minutes.
The source code and an online tool is available at \url{http://be.cs.arizona.edu}. 

Here we count the maximal planar graphs
having particular stack and queue numbers for small values of $n = |V|$.
It turns out that all graphs with $n \le 18$ admit a mixed $1$-stack $1$-queue layout, 
a $3$-stack layout, and a $3$-queue layout.

\newcolumntype{P}[1]{>{\raggedleft\arraybackslash}p{#1}}

\begin{table}[h]
    \centering
    \caption{The number of maximal planar graphs with a given stack and queue numbers for various sizes of the graph}
    \label{table:enumMax}
    \medskip
    
    \begin{tabular}{|P{0.5cm}|P{1.8cm} || P{1.8cm} || P{1.8cm}|P{1.6cm}|| P{1.8cm}| P{1.6cm} ||}
        \toprule
        $n$ & total & mixed & 2-stack & 3-stack & 2-queue & 3-queue \\
        \midrule
        4	& 1	& 1	& 1	& 0 & 1 & 0 \\
        5	& 1	& 1	& 1	& 0 & 1 & 0 \\
        6	& 2	& 2	& 2	& 0 & 2 & 0 \\
        7	& 5	& 5	& 5	& 0 & 5 & 0 \\
        8	& 14	& 14	& 14	& 0 & 14 & 0 \\
        9	& 50	& 50	& 50	& 0 & 50 & 0 \\
        10	& 233	& 233	& 233	& 0 & 233 & 0 \\
        11	& 1249	& 1249	& 1248	& 1 & 1249 & 0 \\
        12	& 7595	& 7595	& 7593	& 2 & 7595 & 0 \\     
        13	& 49566	& 49566	& 49536	& 30 & 49566 & 0 \\
        14	& 339722	& 339722	& 339483	& 239 & 339712 & 10 \\
        15	& 2406841	& 2406841	& 2404472	& 2369 & 2405167 & 1674 \\
        16	& 17490241	& 17490241	& 17468202	& 22039 & 17412878 & 77363 \\
        17	& 129664753	& 129664753	& 129459090	& 205663 & 127855172 & 1809581 \\
        18	& 977526957	& 977526957	& 975647292	& 1879665 & 947394711 & 30132246 \\
        \bottomrule        
    \end{tabular}
\end{table}

\end{document}